\newtheorem{lemma}{Lemma}
\newtheorem{proposition}{Proposition}
\newtheorem{corollary}{Corollary}
\newcommand{\E}[2][]{\mathbb{E}_{#1}\left[#2\right]}
\DeclarePairedDelimiterX{\infdivx}[2]{(}{)}{
  #1\;\delimsize\Vert\;#2
}
\newcommand{\KL}{D\infdivx}
\newcommand\myshade{85}
\colorlet{mylinkcolor}{BrickRed}
\colorlet{mycitecolor}{NavyBlue}
\colorlet{myurlcolor}{Aquamarine}
\begin{document}
\title{Characterising directed and undirected metrics\\ of high-order interdependence}

\author{
  \IEEEauthorblockN{Fernando E. Rosas}
  \IEEEauthorblockA{Department of Informatics\\
                    University of Sussex\\
                    Brighton BN1 9QJ, UK\\
                    \small\texttt{f.rosas@sussex.ac.uk}}
  \and
  \IEEEauthorblockN{Pedro A.M. Mediano}
  \IEEEauthorblockA{Department of Computing\\
                    Imperial College London\\
                    London SW7 2RH, UK\\
                \small\texttt{p.mediano@imperial.ac.uk}}
  \and
  \IEEEauthorblockN{Michael Gastpar}
  \IEEEauthorblockA{School of Computer and Communication Sciences\\
                    École Polytechnique Fédérale de Lausanne\\
                    1015 Lausanne, Switzerland\\
                    \small\texttt{michael.gastpar@epfl.ch}}
}

\maketitle

\begin{abstract}
Systems of interest for theoretical or experimental work often exhibit high-order interactions, corresponding to statistical interdependencies in groups of variables that cannot be reduced to dependencies in subsets of them. While still under active development, the framework of partial information decomposition (PID) has emerged as the dominant approach to conceptualise and calculate high-order interdependencies. 
PID approaches can be grouped in two types: 
\textit{directed} approaches that divide variables into sources and targets, and \textit{undirected} approaches that treat all variables equally. Directed and undirected approaches are usually employed to investigate different scenarios, and hence little is known about how these two types of approaches may relate to each other, or if their corresponding quantities are linked in some way. In this paper we investigate the relationship between the redundancy-synergy index (RSI) and the O-information, which are practical metrics of directed and undirected high-order interdependencies, respectively. 
Our results reveal tight links between these two quantities, and provide two interpretations of them in terms of likelihood ratios in a hypothesis testing setting, as well as in terms of projections in information geometry.

\end{abstract}

\section{Introduction}

Variables characterising biological, social, or technological systems can exhibit complex statistical relationships. 
Crucially, while bivariate interdependencies are fully characterised by their strength (measured e.g.  by the mutual information), higher-order relationships involving three or more variables can be of qualitatively different kinds --- most notably redundant (where multiple variables share the same information) or synergistic (where a set of variables holds some information that cannot be seen from any subset).
The assessment of high-order interdependencies has lead to powerful practical analyses, revealing a plethora of new insights into, for example, the inner workings of natural evolution~\cite{rajpal2023quantifying}, genetic information flow~\cite{cang2020inferring}, psychometric interactions~\cite{marinazzo2022information,varley2022untangling}, cellular automata~\cite{rosas2018information},
and baroque music~\cite{scagliarini2022quantifying}. 
The analysis of higher-order interdependencies has been particularly fruitful for studying biological~\cite{gatica2021high,luppi2022synergistic,herzog2022genuine,varley2023information,varley2023multivariate} and artificial~\cite{tax2017partial,proca2022synergistic,kaplanis2023learning} neural systems, where redundancies and synergies have been found to play different roles balancing the needs for reliable sensory-motor actions on the one hand, and flexible and adaptive activity in association cortices on the other~\cite{luppi2024information}.

Partial information decomposition (PID) has emerged as the dominant framework to reason about high-order statistical interdependencies~\cite{williams2010nonnegative,griffith2014quantifying,wibral2017partial,mediano2021towards}. 
PID approaches can be divided into two families: \emph{directed} approaches that decompose information provided by source variables about a target (as described in the original formulation of PID~\cite{williams2010nonnegative}), and \emph{undirected} approaches that disentangle the information shared between variables without attributing distinct roles to them~\cite{rosas2016understanding,ince2017partial}, being invariant under permutations. 
These approaches find natural application in different scenarios --- directed approaches are natural for investigating e.g. neural systems with clear inputs and outputs~\cite{wibral2017partial}, while undirected approaches are well suited to study interdependencies between equivalent entities, e.g. spins in Ising models~\cite{vijayaraghavan2017anatomy}. We note that a different type of directed information analysis can be found, e.g. in Ref.~\cite{So_2012}.

Despite its mathematical elegance, an important weakness of PID is that the number of its atoms scales super-exponentially, and hence the computation of the full decomposition is unfeasible for analyses involving a large number of variables. 
This limitation can be overcome by using so-called \textit{coarse-grained PID metrics}, which calculate not individual atoms but linear combinations of them. 
Such metrics also adhere to the directed/undirected distinction: for example, the overall balance between redundancy and synergy can be estimated in a directed manner via the redundancy-synergy index (RSI)~\cite{nips02-NS02}, and in an undirected manner via the O-information~\cite{rosas2019quantifying}. 

Contrasting directed and undirected PID approaches is a fertile, yet understudied, way to deepen our understanding of information decomposition.
As a step in this direction, in this paper we present an analysis of the relationship between the RSI and O-information in various ways. 
Our analysis reveals analytical expressions for the RSI and O-information in terms of each other, which highlights similarities and differences between these metrics. Additionally, our results lead to a novel characterisation of the RSI and O-information in terms of the log-likelihood ratio between redundancy and synergy-dominated models, providing an interpretation of these quantities in a hypothesis testing setting.

\section{Definitions}
\label{sec:preliminaries}

We consider a multivariate system of $n$ joint random variables, with a state specified by the random vector $\bm X=(X_1,\dots,X_n)$. We will investigate various metrics that assess the interdependencies between the parts of $\bm X$, sometimes considering how $\bm X$ relates to another variable $Y$. 
We use the shorthand notation $\bm X_i^j=(X_i,\dots,X_j)$ for $1\leq i\leq j\leq n$, $\bm X^j = \bm X_1^j$, and $\bm X_{-j} = (X_1,\dots,X_{j-1},X_{j+1},\dots,X_n)$.
For simplicity we focus on the case of discrete variables, although most of our results apply directly to the continuous case. 
The alphabet of $X_i$ and $Y$ is denoted by $\mathcal{X}_i$ and $\mathcal{Y}$, respectively, and their cardinality by $|\mathcal{X}_i|$ and $|\mathcal{Y}|$.

Let us now define the two metrics that are at the focus of our investigation. The \textit{O-information}~\cite{rosas2019quantifying} of the components of the random vector $\bm X$ is given by
\begin{equation}
    \Omega(\bm X) \coloneqq (n-2) H(\bm X) + \sum_{j=1}^n \Big( H(X_j) - H(\bm X_{-j}) \Big),\label{eq:def_Oinfo}
\end{equation}
where $H$ is the Shannon entropy. 
The \textit{redundancy-synergy index}~\cite{nips02-NS02} of the components of the random vector $\bm X$ and the random variable $Y$ is given by
\begin{equation}
    \text{RSI}(\bm X; Y) \coloneq \sum_{j=1}^n I(X_j; Y) - I(\bm X;Y),\label{eq:def_RSI}
\end{equation}
where $I$ denotes Shannon's mutual information. 
Note that both the O-information and the RSI are symmetric to permutations in the ordering of the components of $\bm X$; however, the $\text{RSI}$ is not invariant to exchanging the role of $Y$ and any of the $X_i$'s. 

A fundamental property of the O-information and RSI is that both quantify the balance between synergistic and redundant interdependencies. As an example, consider the `three-bit copy' system, where $X_1$ is a fair coin flip and $X_3 = X_2 = X_1$); and the \texttt{xor} logic gate, where $X_1,X_2$ are independent fair coin flips and $X_3 = X_1 + X_2 \text{(mod 2)}$. Both RSI and O-information are positive for the copy system, representing redundancy; and negative for the \texttt{xor}, representing synergy. Although both metrics yield identical results for these small systems, they in general differ. More details about how these metrics reflect high-order interdependencies can be found in Refs.~\cite{rosas2019quantifying,timme2014synergy}.

Our analyses will use three different multivariate extensions of Shannon's mutual information, specifically the \textit{Total Correlation} (TC)~\cite{watanabe1960information}, \textit{Dual Total Correlation} (DTC)~\cite{te1978nonnegative}, and \textit{Interaction Information} (II)~\cite{mcgill1954multivariate},\footnote{The interaction information is
closely related to the \textit{I-measures}~\cite{yeung1991}, the
\textit{co-information}~\cite{Bell2003}, and the \textit{multi-scale
complexity}~\cite{bar2004multiscale}.} defined as follows:
\begin{align}
    \text{TC}(\bm X) \coloneqq& \,\sum_{j=1}^n H(X_j) - H(\bm X), \label{eq:def_TC}\\
    \text{DTC}(\bm X) \coloneqq& \: \,H(\bm X) -\sum_{j=1}^n H(X_j|\bm X_{-j}),\label{eq:def_DTC}
    \\
    \text{II}(\bm X) \coloneqq&  \sum_{\bm{\gamma}\subseteq\{1,\dots,n\}} (-1)^{|\bm{\gamma}|+1} H(\bm{X}^{\bm{\gamma}}),\label{eq:def_II}
\end{align}
where the sum in Eq.~\eqref{eq:def_II} goes over all the subsets of indices $\bm{\gamma}\subseteq\{1,\dots,n\}$,
with $|\bm{\gamma}|$ being the cardinality of $\bm{\gamma}$ and
$\bm{X}^{\bm{\gamma}}$ the vector of all variables with indices in
$\bm{\gamma}$. 
Note that while $\text{TC}(\bm X^2) = \text{DTC}(\bm X^2) = \text{II}(\bm X^2) = I(X_1;X_2)$, these quantities are in general different for $\bm X^n$ with $n\geq 3$. 
In effect, TC and DTC are non-negative metrics for all $n\geq 2$ that assess the overall strength of the correlations in different ways, as can be seen from the following identities~\cite[Cor.1]{rosas2019quantifying}:
\begin{align}
    \text{TC}(\bm X) =& \sum_{j=2}^n I(\bm X^{j-1};X_j),\label{eq:TC_exp}\\
    \text{DTC}(\bm X) =& I(\bm X^{n-1};X_n) + \sum_{j=2}^{n-1} I(\bm X^{j-1};X_j|\bm X_{j+1}^n).\label{eq:DTC_exp}
\end{align}
The interaction information is a signed metric which, in contrast with the O-information and the RSI, doesn't have a simple interpretation in terms of redundancies and synergies~\cite{williams2010nonnegative} (while having 
interpretations in terms of algebraic topology~\cite{baudot2015homological}). In the sequel we only use the interaction information for $n=3$, which for simplicity we denote as 
\begin{align}
    I(A;B;C) :=& \;\text{II}\big( (A,B,C) \big) 
    = I(A;B) - I(A;B|C).\label{eq:interaction_info}
\end{align}

Finally, note that conditional versions of all these quantities can be introduced by simply taking the average of the metric in question evaluated on the conditional quantity, akin to the definition of the conditional entropy (see e.g. Ref.~\cite[p.17]{CoverThomas06}).

\section{Basic properties}

This section presents a characterisation of the O-information and the RSI in terms of multivariate extensions of Shannon's mutual information. These characterisations, grouped into two lemmas, provide fundamental identities over which our main results build upon. Our first lemma expresses the O-information and RSI in terms of the TC and DTC.

\begin{lemma}\label{lemma:TCandDCT}
The following equalities hold:
\begin{align}
\Omega(\bm X) &= \normalfont{\text{TC}}(\bm X) - \normalfont{\text{DTC}}(\bm X),\label{eq:oinfo_TCDTC}\\
\normalfont{\text{RSI}}(\bm X;Y) &= \normalfont{\text{TC}}(\bm X) - \normalfont{\text{TC}}(\bm X|Y).\label{eq:rsi_TCDTC}
\end{align}
\end{lemma}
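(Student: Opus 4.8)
The plan is to prove both identities by direct substitution of the definitions followed by elementary chain-rule manipulations of the Shannon entropy; neither equality requires any machinery beyond Eqs.~\eqref{eq:def_Oinfo}--\eqref{eq:def_DTC}.

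For the second identity \eqref{eq:rsi_TCDTC} I would work from the right-hand side, expanding the conditional total correlation via the conditional analogue of \eqref{eq:def_TC} as $\text{TC}(\bm X|Y) = \sum_{j=1}^n H(X_j|Y) - H(\bm X|Y)$. Subtracting this from $\text{TC}(\bm X)$ and regrouping index by index yields $\sum_{j=1}^n \bigl[ H(X_j) - H(X_j|Y) \bigr] - \bigl[ H(\bm X) - H(\bm X|Y) \bigr]$, and recognising each bracket as a mutual information through $I(U;Y) = H(U) - H(U|Y)$ gives exactly $\sum_{j=1}^n I(X_j;Y) - I(\bm X;Y) = \text{RSI}(\bm X;Y)$. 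This direction is essentially immediate.

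For the first identity \eqref{eq:oinfo_TCDTC} the key step is to rewrite the DTC in terms of the leave-one-out entropies $H(\bm X_{-j})$. Applying the chain rule to the split $\bm X = (X_j, \bm X_{-j})$ gives $H(X_j|\bm X_{-j}) = H(\bm X) - H(\bm X_{-j})$, so that $\sum_{j=1}^n H(X_j|\bm X_{-j}) = n\, H(\bm X) - \sum_{j=1}^n H(\bm X_{-j})$. Substituting this into \eqref{eq:def_DTC} and forming $\text{TC}(\bm X) - \text{DTC}(\bm X)$, the $H(\bm X)$ contributions combine as $-H(\bm X) - H(\bm X) + n\, H(\bm X) = (n-2)\, H(\bm X)$, while the surviving terms assemble into $\sum_{j=1}^n \bigl[ H(X_j) - H(\bm X_{-j}) \bigr]$, which together reproduce \eqref{eq:def_Oinfo}.

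Since the calculations are routine, there is no genuine conceptual obstacle here; the only point demanding care is the bookkeeping of the $H(\bm X)$ coefficient in the first identity, where one must correctly track the two copies arising from the definitions of TC and DTC against the $n$ copies generated by the chain-rule rewriting in order to land on the coefficient $(n-2)$.
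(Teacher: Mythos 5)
Your proof is correct and follows the same route as the paper, which simply states that the lemma is ``direct from comparing'' the definitions \eqref{eq:def_Oinfo}, \eqref{eq:def_RSI}, \eqref{eq:def_TC}, and \eqref{eq:def_DTC}; you have merely written out the elementary chain-rule bookkeeping that the paper leaves implicit. Both computations (including the $(n-2)H(\bm X)$ coefficient) check out.
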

\begin{proof}
Direct from comparing \eqref{eq:def_Oinfo}, \eqref{eq:def_RSI}, \eqref{eq:def_TC}, and \eqref{eq:def_DTC}.
\end{proof}

This lemma implies that both the O-information and the RSI capture only high-order effects, as $\Omega(\bm X^2) = \text{RSI}(X_1;X_2) = 0$ and hence they give non-trivial results only when $n\geq 3$. The lemma also provides some first intuitions of what these quantities are measuring. The RSI captures to what extent correlations within $\bm X$ are created and/or destroyed by conditioning on $Y$. On the other hand, the O-information captures the differences between the correlations in $\bm X$ as measured by the TC and DTC, which --- thanks to Eqs.~\eqref{eq:TC_exp} and \eqref{eq:DTC_exp} --- can also be seen to be related to conditioning, but in a more complicated way. Our second lemma exploits this angle, expressing the O-information and RSI in terms of the interaction information for $n=3$ variables.

\begin{lemma}\label{lemma:decompositions}
    The following identities hold:
    \begin{align}
        \Omega(\bm X) &= \sum_{j=2}^{n-1} I(\bm X^{j-1}; X_j;\bm X_{j+1}^n), \label{eq:oinfo_dec}\\
        \normalfont{\text{RSI}}(\bm X;Y) &= \sum_{j=1}^{n-1} I(\bm X^{j-1}; X_j; Y), \label{eq:rsi_dec}
    \end{align}
    for $n\geq 3$ in Eq.~\eqref{eq:oinfo_dec} and $n\geq 2$ in Eq.~\eqref{eq:rsi_dec}.
\end{lemma}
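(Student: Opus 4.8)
The plan is to obtain both identities from Lemma~\ref{lemma:TCandDCT} together with the chain-rule expansions \eqref{eq:TC_exp}--\eqref{eq:DTC_exp}, converting each difference between a mutual information and a conditioned version of it into a three-variable interaction information through \eqref{eq:interaction_info}. The only algebraic fact I need beyond the excerpt is that $I(A;B;C)$ is invariant under permutations of $A,B,C$, which is immediate once \eqref{eq:interaction_info} is written out in entropies.

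For \eqref{eq:oinfo_dec} I would start from $\Omega(\bm X)=\text{TC}(\bm X)-\text{DTC}(\bm X)$ and substitute \eqref{eq:TC_exp} and \eqref{eq:DTC_exp}. The term $I(\bm X^{n-1};X_n)$ occurs in both expansions and cancels, leaving $\sum_{j=2}^{n-1}\big[\,I(\bm X^{j-1};X_j)-I(\bm X^{j-1};X_j\mid\bm X_{j+1}^n)\,\big]$. Reading each bracket as an interaction information via \eqref{eq:interaction_info} with $A=\bm X^{j-1}$, $B=X_j$, $C=\bm X_{j+1}^n$ gives exactly \eqref{eq:oinfo_dec}.

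For \eqref{eq:rsi_dec} I would use $\text{RSI}(\bm X;Y)=\text{TC}(\bm X)-\text{TC}(\bm X\mid Y)$, expand both total correlations with \eqref{eq:TC_exp} (the conditional one being the $Y$-average of the unconditional expansion), and pair the summands index by index; each pair $I(\bm X^{j-1};X_j)-I(\bm X^{j-1};X_j\mid Y)$ collapses by \eqref{eq:interaction_info} into $I(\bm X^{j-1};X_j;Y)$, so the RSI becomes a sum of interaction informations with $Y$ in the last slot. Building $\bm X$ up from the opposite end via $\text{TC}(\bm X)=\sum_j I(X_j;\bm X_{j+1}^n)$ would instead pair each $X_j$ with the context $\bm X_{j+1}^n$, and the permutation symmetry of the interaction information lets $Y$ sit in any argument; this is why the decomposition is insensitive to the chosen ordering of the components of $\bm X$.

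The step needing the most care is the index bookkeeping at the two ends of each sum. Extreme terms involve an empty tuple (such as $\bm X^0$, or $\bm X_{n+1}^n$), for which the interaction information vanishes and can be silently added or dropped, and in the O-information case one must additionally verify that the full-context term cancels between the TC and DTC expansions. Tracking precisely which summands survive, and sanity-checking the resulting summation range against the small cases $n=2$ and $n=3$, is the only real subtlety; the remainder is a direct substitution.
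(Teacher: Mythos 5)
Your proof is correct and is essentially the paper's own argument spelled out: the published proof just cites Eqs.~\eqref{eq:TC_exp}, \eqref{eq:DTC_exp}, and \eqref{eq:interaction_info} together with Lemma~\ref{lemma:TCandDCT}, and your substitution, cancellation of the shared term $I(\bm X^{n-1};X_n)$, and term-by-term collapse into interaction informations is exactly that computation. One remark: your (correct) bookkeeping for the RSI yields $\sum_{j=2}^{n} I(\bm X^{j-1};X_j;Y)$, so the range $\sum_{j=1}^{n-1}$ printed in the lemma is an off-by-one slip in the paper --- it drops the generally nonzero $j=n$ term and keeps a vanishing $j=1$ term, as the $n=2$ case and Corollary~\ref{cor:equal_n3} confirm --- which is precisely the end-of-sum subtlety you flagged.
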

\begin{proof}
Follows directly from Eqs.~\eqref{eq:TC_exp}, \eqref{eq:DTC_exp}, and~\eqref{eq:interaction_info} in combination with the results of Lemma~\ref{lemma:TCandDCT}.
\end{proof}

\begin{corollary}\label{cor:equal_n3}
    RSI and O-information coincide for the case of three variables:
    \begin{equation}
        \Omega(\bm X^3) = \normalfont{\text{RSI}}(\bm X^2;X_3) = I(X_1;X_2;X_3),
    \end{equation}
    but are in general different for $n\geq 4$.
\end{corollary}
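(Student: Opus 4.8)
The plan is to derive both equalities by specialising the decompositions of Lemmas~\ref{lemma:TCandDCT} and~\ref{lemma:decompositions} to three variables, and then to separate the two metrics for $n\geq 4$ by means of an explicit counterexample.

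First I would treat the O-information by setting $n=3$ in Eq.~\eqref{eq:oinfo_dec}: the outer sum $\sum_{j=2}^{n-1}$ then reduces to the single index $j=2$, and since $\bm X^{1}=X_1$ and $\bm X_{3}^{3}=X_3$ this gives $\Omega(\bm X^3)=I(X_1;X_2;X_3)$ directly. For the RSI I would work from Eq.~\eqref{eq:rsi_TCDTC} rather than Eq.~\eqref{eq:rsi_dec}, so as to avoid the degenerate term with empty first argument that the latter produces at $n=2$. With source $(X_1,X_2)$ and target $X_3$ one has $\text{TC}(X_1,X_2)=I(X_1;X_2)$ and, for its conditional version, $\text{TC}(X_1,X_2|X_3)=I(X_1;X_2|X_3)$; subtracting and applying the definition of the interaction information in Eq.~\eqref{eq:interaction_info} yields $\text{RSI}(\bm X^2;X_3)=I(X_1;X_2)-I(X_1;X_2|X_3)=I(X_1;X_2;X_3)$. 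This coincides with the O-information expression, completing the chain of equalities.

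To establish that the metrics differ for $n\geq 4$, it suffices to exhibit one system on which they disagree. Taking the natural pairing $\Omega(\bm X^n)$ against $\text{RSI}(\bm X^{n-1};X_n)$, a direct expansion of the definitions in Eqs.~\eqref{eq:def_Oinfo} and~\eqref{eq:def_RSI} for $n=4$ shows, after cancellation, that the difference reduces to a single conditional interaction information,
\[
    \Omega(\bm X^4)-\text{RSI}(\bm X^3;X_4)=I(X_1;X_2;X_3|X_4).
\]
A concrete witness is then the system in which $X_1,X_2$ are independent fair bits, $X_3=X_1\oplus X_2$, and $X_4$ is an independent fair bit: there $I(X_1;X_2;X_3|X_4)=I(X_1;X_2;X_3)=-1\neq 0$, so the two metrics genuinely disagree.

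The routine content is the entropy bookkeeping; the only delicate points are at the boundaries of the summations --- in particular the degenerate term with empty first argument that Eq.~\eqref{eq:rsi_dec} produces at $n=2$, which is why I route the three-variable RSI identity through the total correlation of Lemma~\ref{lemma:TCandDCT}. The main obstacle, modest as it is, lies in the $n\geq 4$ direction: one must carry out the cancellation carefully so that the difference isolates into the conditional interaction information above, and then verify that this residual is nonzero on the chosen example.
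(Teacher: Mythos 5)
Your proposal is correct and follows essentially the route the paper intends: the corollary is stated without an explicit proof precisely because it is the specialisation of Lemma~\ref{lemma:decompositions} (for the O-information) and Lemma~\ref{lemma:TCandDCT} together with Eq.~\eqref{eq:interaction_info} (for the RSI) to three variables, which is exactly what you do --- and your detour through $\text{TC}(\bm X)-\text{TC}(\bm X|Y)$ is a sensible way to sidestep the boundary-index quirk in Eq.~\eqref{eq:rsi_dec}. The one place you go beyond the paper is the $n\geq 4$ claim, which the paper asserts without justification: your identity $\Omega(\bm X^4)-\text{RSI}(\bm X^3;X_4)=I(X_1;X_2;X_3|X_4)$ is correct (it is an instance of Proposition~\ref{prop1a} combined with the conditional version of the three-variable identity), and the \texttt{xor}-plus-independent-bit witness does give $-1\neq 0$, so that part is a genuine and welcome completion rather than a deviation.
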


While Corollary~\ref{cor:equal_n3} proves that the O-information and RSI are always equal for three variables and generally different for more, it is still not clear why is this the case. 
Section~\ref{sec:equalities} presents results that illuminate why and how these quantities differ for $n\geq 4$.

Although not strictly necessary for our analysis, here we present tight upper and lower bounds for the O-information and RSI. For this, we define $|\mathcal{X}^\text{max}| \coloneqq \max_{j=1}^n |\mathcal{X}_j|$ as the maximum of the cardinalities of the alphabets of all $X_i$'s, as well as $|\mathcal{T}| \coloneqq \min \big\{ |\mathcal{Y}|, |\mathcal{X}^\text{max}| \big\}$, and $|\mathcal{T}'| \coloneqq \min \big\{ |\mathcal{Y}|, \prod_{i=1}^n|\mathcal{X}_i|\big\}$.
 
\begin{lemma}\label{cor:bounds}
    The following bounds hold:
    \begin{align}
        (n-2) \log | \mathcal{X}^{\normalfont\text{max}} |  \geq& ~\Omega(\bm X) \geq -(n-2) \log | \mathcal{X}^{\normalfont\text{max}} | \label{eq:bounds_Oinfo},\\
        (n-1) \log | \mathcal{T} | \geq& ~\normalfont{\text{RSI}}(\bm X;Y) \geq -\log | \mathcal{T}' |. \label{eq:bounds_RSI}
    \end{align}
\end{lemma}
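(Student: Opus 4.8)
The plan is to prove all four inequalities by reducing each metric to a sum of interaction-information terms via Lemma~\ref{lemma:decompositions}, and then bounding every summand with two elementary facts: that $0 \le I(A;B) \le H(B)$ and $0 \le I(A;B|C) \le H(B|C) \le H(B)$, together with $H(X_j) \le \log|\mathcal{X}_j| \le \log|\mathcal{X}^{\text{max}}|$.

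For the O-information I would start from the decomposition~\eqref{eq:oinfo_dec}, which expresses $\Omega(\bm X)$ as a sum of exactly $n-2$ interaction terms. Writing each summand, centred on $X_j$, in the form $I(\bm X^{j-1};X_j;\bm X_{j+1}^n) = I(\bm X^{j-1};X_j) - I(\bm X^{j-1};X_j|\bm X_{j+1}^n)$, the non-negativity of the conditional term yields the upper bound $I(\bm X^{j-1};X_j) \le H(X_j) \le \log|\mathcal{X}^{\text{max}}|$ on each term, while the non-negativity of the unconditioned term yields the lower bound $-I(\bm X^{j-1};X_j|\bm X_{j+1}^n) \ge -H(X_j) \ge -\log|\mathcal{X}^{\text{max}}|$. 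Summing the $n-2$ terms gives~\eqref{eq:bounds_Oinfo}.

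The RSI upper bound follows the same template applied to~\eqref{eq:rsi_dec}. Centring each of its $n-1$ interaction terms on the pair $(X_j,Y)$ writes it as $I(X_j;Y)$ minus a non-negative conditional term, so each summand is at most $I(X_j;Y) \le \min\{H(X_j),H(Y)\} \le \log|\mathcal{T}|$, and summing yields $(n-1)\log|\mathcal{T}|$. For the RSI lower bound it is cleaner to return to the definition~\eqref{eq:def_RSI}: since every $I(X_j;Y) \ge 0$ one has $\text{RSI}(\bm X;Y) \ge -I(\bm X;Y)$, and bounding $I(\bm X;Y) \le \min\{H(\bm X),H(Y)\} = \log|\mathcal{T}'|$ completes~\eqref{eq:bounds_RSI}.

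The step that needs care is the constant $n-2$ in~\eqref{eq:bounds_Oinfo}. The naive route through Lemma~\ref{lemma:TCandDCT}---bounding $\Omega \le \text{TC}(\bm X)$ and $\Omega \ge -\text{DTC}(\bm X)$ and then estimating TC and DTC separately---only delivers $(n-1)\log|\mathcal{X}^{\text{max}}|$, because each of TC and DTC carries $n-1$ non-negative summands. The sharper constant comes precisely from the interaction form, in which the $j=n$ contribution $I(\bm X^{n-1};X_n)$ common to both TC and DTC cancels, leaving only $n-2$ net terms; recognising and exploiting this cancellation is the crux. Finally, although the lemma asserts only the inequalities, tightness can be exhibited by explicit examples: the generalised copy system (all $X_j$ equal to a single uniform variable on $|\mathcal{X}^{\text{max}}|$ symbols, with $Y$ equal to that variable for the RSI) saturates the upper bounds, and a parity-type construction saturates the lower bounds.
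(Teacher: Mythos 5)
Your proof is correct and follows essentially the same route as the paper: the O-information bound comes from the decomposition~\eqref{eq:oinfo_dec} with each of the $n-2$ interaction-information summands bounded by $\pm\log|\mathcal{X}^{\text{max}}|$, and the RSI bounds come (essentially) directly from the definition~\eqref{eq:def_RSI}. The only cosmetic slip is that $\min\{H(\bm X),H(Y)\}$ is bounded above by, not equal to, $\log|\mathcal{T}'|$; the inequality chain is unaffected.
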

\begin{proof}
   Eq.~\eqref{eq:bounds_Oinfo} follow from \eqref{eq:oinfo_dec} after noticing that
   \begin{equation}
       |\mathcal{J}| \geq I(A;B;C) \geq -|\mathcal{J}|,\nonumber
   \end{equation}
   where $|\mathcal{J}| = \min\{|\mathcal{A}|,|\mathcal{B}|,|\mathcal{C}|\}$ is the size of the smallest alphabet among $A$, $B$, and $C$, which follows from Eq.~\eqref{eq:interaction_info} and the fact that $I(A;B;C)$ is symmetric in all its three arguments. The bounds in Eq.~\eqref{eq:bounds_RSI} follow directly from Eq.~\eqref{eq:def_RSI}.   
\end{proof}

It is easy to check these bounds are tight using generalisations of the copy and \texttt{xor} systems from Section~\ref{sec:preliminaries}. Let us first consider the `$n$-bit copy': a random binary vector $\bm X$ and random variable $Y$ such that $X_1$ is a fair coin flip and $X_1=X_2=\dots=X_n=Y$. It is direct to check that this yields $\Omega(\bm X) = n-2$ and $\text{RSI}(\bm X;Y) = n-1$, attaining the upper bounds. 
Now consider the `$n$-bit \texttt{xor}': a random binary vector $\bm X$ such that $X_1,\dots,X_{n-1}$ are i.i.d. fair coins and $X_n = \sum^{n-1}_{j=1} X_j \text{(mod 2)}$, which gives $\Omega(\bm X) = -(n-2)$. Another variation of this idea, where $X_1,\dots,X_n$ are all fair coins and $Y=\sum^{n}_{j=1} X_j \text{(mod 2)}$ gives $\text{RSI}(\bm X;Y) = -1$. These two types of systems --- copy and \texttt{xor} --- are extreme examples belonging to more general families of redundant or synergistic distributions explored further in Section~\ref{sec:log-likelihood}.

\section{Relating directed and undirected metrics}
\label{sec:equalities}

We now present our first main result: expressions for the O-information and the RSI in terms of each other which highlight similarities and differences between them. 
We first present an expression of the RSI in terms of the O-information.

\begin{proposition}\label{prop1a}
The $\text{RSI}$ can be expressed in terms of the O-information as follows:
\begin{align}
    \normalfont{\text{RSI}}(\bm X;Y) &= \Omega(\bm X,Y) - \Omega(\bm X|Y). \label{eq:prop1a}
\end{align}
\end{proposition}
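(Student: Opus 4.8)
The plan is to push everything through Lemma~\ref{lemma:TCandDCT} and then to verify that the dual-total-correlation pieces, together with one spurious copy of $I(\bm X;Y)$, cancel. First I would apply the O-information identity~\eqref{eq:oinfo_TCDTC} to both terms on the right-hand side of~\eqref{eq:prop1a}, treating $(\bm X,Y)$ as an $(n+1)$-component vector and reading $\Omega(\bm X|Y)$, $\text{TC}(\bm X|Y)$, and $\text{DTC}(\bm X|Y)$ as the averages over $Y$ of the respective metrics evaluated on the conditional law $p(\bm x\mid y)$, as per the convention fixed at the end of Section~\ref{sec:preliminaries}. This recasts the target as the difference of a ``TC part'' $\text{TC}(\bm X,Y)-\text{TC}(\bm X|Y)$ and a ``DTC part'' $\text{DTC}(\bm X,Y)-\text{DTC}(\bm X|Y)$.

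For the TC part, I would invoke the elementary chain rule $\text{TC}(\bm X,Y)=\text{TC}(\bm X)+I(\bm X;Y)$, which drops straight out of the definition~\eqref{eq:def_TC}, since adjoining $Y$ adds $H(Y)$ to the marginal-entropy sum and replaces $-H(\bm X)$ by $-H(\bm X,Y)$. Combining this with the RSI identity~\eqref{eq:rsi_TCDTC} written as $\text{TC}(\bm X|Y)=\text{TC}(\bm X)-\text{RSI}(\bm X;Y)$, the TC part collapses to $I(\bm X;Y)+\text{RSI}(\bm X;Y)$.

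The crux is the DTC part, where I expect the only genuine bookkeeping. Expanding $\text{DTC}(\bm X,Y)$ from~\eqref{eq:def_DTC} for the $(n+1)$-vector yields the leave-one-out conditional entropies $H(X_j\mid \bm X_{-j},Y)$ for $j=1,\dots,n$, plus an extra term $H(Y\mid\bm X)$ arising from leaving out $Y$ itself. The conditional $\text{DTC}(\bm X|Y)$ produces exactly the same leave-one-out entropies $H(X_j\mid\bm X_{-j},Y)$, so these cancel in the difference. The delicate (most error-prone) point is to confirm that the residual terms reorganise as $H(\bm X,Y)-H(Y\mid\bm X)-H(\bm X\mid Y)=H(\bm X)-H(\bm X\mid Y)=I(\bm X;Y)$, using the chain rule $H(\bm X,Y)=H(\bm X)+H(Y\mid\bm X)$; the main thing to watch is that the conditioning on $Y$ in the conditional metric lines up term-by-term with the corresponding entropies inside the joint $(n+1)$-variable DTC.

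Finally, subtracting the DTC part from the TC part cancels the two copies of $I(\bm X;Y)$ and leaves $\text{RSI}(\bm X;Y)$, establishing~\eqref{eq:prop1a}. The identity is purely entropic, so once the term alignment in the DTC computation is pinned down the remainder is routine algebra.
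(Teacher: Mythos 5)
Your proof is correct, but it takes a genuinely different route from the paper's. The paper proves Proposition~\ref{prop1a} by starting from the interaction-information decomposition of Lemma~\ref{lemma:decompositions}, writing $\Omega(\bm X,Y)=I(\bm X^{n-1};X_n;Y)+\sum_{j=2}^{n-1}I(\bm X^{j-1};X_j;\bm X_{j+1}^n,Y)$ and then applying the chain rule $I(A;B;CD)=I(A;B;C)+I(A;B;D|C)$ to each summand, so that the $Y$-parts assemble into $\text{RSI}(\bm X;Y)$ via Eq.~\eqref{eq:rsi_dec} and the conditional parts into $\Omega(\bm X|Y)$ via Eq.~\eqref{eq:oinfo_dec}. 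You instead route everything through Lemma~\ref{lemma:TCandDCT} and do the entropy bookkeeping directly; your two key computations, $\text{TC}(\bm X,Y)-\text{TC}(\bm X|Y)=I(\bm X;Y)+\text{RSI}(\bm X;Y)$ and $\text{DTC}(\bm X,Y)-\text{DTC}(\bm X|Y)=H(\bm X,Y)-H(Y|\bm X)-H(\bm X|Y)=I(\bm X;Y)$, both check out (the leave-one-out terms $H(X_j|\bm X_{-j},Y)$ do cancel exactly as you claim, given the paper's convention for conditional metrics). Your argument is arguably more elementary, relying only on the chain rule for entropy rather than on Lemma~\ref{lemma:decompositions}; what the paper's approach buys is a term-by-term structural explanation --- each interaction-information summand splits into a ``$Y$-part'' and a ``conditioned-on-$Y$ part'' --- which makes the interpretation after the proposition (and the telescoping in Corollary~\ref{cor:o_by_rsi}) more transparent.
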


\begin{proof}
We start by noting that, using Eq.~\eqref{eq:oinfo_dec}, one has that
    \begin{equation}
        \Omega(\bm X,Y) = I(\bm X^{n-1};X_n;Y) + \sum_{j=2}^{n-1} I(\bm X^{j-1};X_j;\bm X_{j+1}^n,Y).\label{eq:deriv1}
    \end{equation}
To proceed, we note that the interaction information possess the following chain rule:
\begin{align}
\label{eq:chain_rule_interaction}
    I(A;B;CD) &= I(A;B;C) + I(A;B;D|C),\nonumber
\end{align}
which can be verified directly from Eq.~\eqref{eq:interaction_info}. Using this chain rule on Eq.~\eqref{eq:deriv1}, one obtains that
    \begin{align}
        \Omega(\bm X,Y) \!
        =& \sum_{j=2}^n I(\bm X^{j-1};X_j;Y) \!+\! \sum_{j=2}^{n-1} I(\bm X^{j-1};X_j;\bm X_{j+1}^n|Y) \nonumber\\
        =& \:\text{RSI}(\bm X;Y) + \Omega(\bm X|Y),\nonumber
    \end{align}
    which is equivalent to Eq.~\eqref{eq:prop1a}. Above, the last equality follows from Eqs.~\eqref{eq:oinfo_dec} and \eqref{eq:rsi_dec}.
\end{proof}

This result illuminates what the RSI is measuring: the high-order effects in the whole system (given by the O-information) minus the high-order effects in the predictors induced by conditioning on the target variable $Y$. 
This explains, for example, why RSI and O-info are equal for $n=3$: if there are only two predictors, there cannot be high-order effects between them.

The next Corollary allows us to express the O-information in terms of the $\text{RSI}$. This result shows that the O-information is equivalent to seeing the system described by $\bm X$ via successive cuts, with each cut considering a different target variable and predictors. 

\begin{corollary}\label{cor:o_by_rsi}
The O-information can be expressed in terms of the RSI as follows:
\begin{align}
    \Omega(\bm X) 
    = \normalfont{\text{RSI}}(\bm X^{n-1};X_n) + \sum_{j=2}^{n-2} \normalfont{\text{RSI}}(\bm X^j;X_{j+1} | \bm X_{j+2}^n).\nonumber
\end{align}
\end{corollary}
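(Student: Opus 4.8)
The plan is to derive this identity by applying Proposition~\ref{prop1a} recursively, peeling off one variable at a time as the ``target'' and conditioning on all the variables already peeled off. The starting point is Proposition~\ref{prop1a} itself, which after rearranging reads $\Omega(\bm X, Y) = \text{RSI}(\bm X; Y) + \Omega(\bm X \mid Y)$. Reading the full system $\bm X = \bm X^n$ as the augmented system $(\bm X^{n-1}, X_n)$ and letting $X_n$ play the role of the target $Y$ immediately produces the isolated first term:
\begin{equation}
\Omega(\bm X^n) = \text{RSI}(\bm X^{n-1}; X_n) + \Omega(\bm X^{n-1} \mid X_n). \nonumber
\end{equation}

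The residual $\Omega(\bm X^{n-1} \mid X_n)$ has the same shape as the left-hand side, but on one fewer variable and conditioned on $X_n$. First I would establish a conditional version of Proposition~\ref{prop1a}, namely $\text{RSI}(\bm X; Y \mid Z) = \Omega(\bm X, Y \mid Z) - \Omega(\bm X \mid Y, Z)$ for an arbitrary conditioning variable $Z$. Since Proposition~\ref{prop1a} is an identity between information quantities valid for every joint distribution, it holds in particular for each conditional distribution given $Z = z$; averaging over $z$ then yields the conditional identity, because all the conditional O-information and RSI terms are defined precisely as such averages (as noted at the end of Section~\ref{sec:preliminaries}). Applying this conditional identity to $\Omega(\bm X^{n-1} \mid X_n) = \Omega\big((\bm X^{n-2}, X_{n-1}) \mid X_n\big)$, with $X_{n-1}$ as target, produces the next term $\text{RSI}(\bm X^{n-2}; X_{n-1} \mid X_n)$ together with a new residual $\Omega(\bm X^{n-2} \mid \bm X_{n-1}^n)$.

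Iterating defines the recursion: at the stage that peels $X_{j+1}$ one writes
\begin{equation}
\Omega(\bm X^{j+1} \mid \bm X_{j+2}^n) = \text{RSI}(\bm X^j; X_{j+1} \mid \bm X_{j+2}^n) + \Omega(\bm X^j \mid \bm X_{j+1}^n), \nonumber
\end{equation}
which introduces exactly the summand indexed by $j$ in the claimed formula and leaves a residual of identical shape with $j$ decreased by one. The recursion terminates once the residual reaches two variables: $\Omega(\bm X^2 \mid \bm X_3^n) = 0$, since the O-information of two variables vanishes (as observed after Lemma~\ref{lemma:TCandDCT}). Collecting the term produced at the top of the recursion, $\text{RSI}(\bm X^{n-1}; X_n)$, together with those produced for $j = n-2, n-3, \dots, 2$ then gives precisely the stated decomposition.

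The routine parts are the algebraic rearrangement of Proposition~\ref{prop1a} and the telescoping bookkeeping. The main obstacle, and the step deserving the most care, is the indexing: one must verify that the conditioning set at each stage is exactly $\bm X_{j+2}^n$ and that the target is $X_{j+1}$, so that the summands align with the claimed range $j = 2, \dots, n-2$ and the isolated $j = n-1$ term (whose conditioning set $\bm X_{n+1}^n$ is empty) matches the unconditioned $\text{RSI}(\bm X^{n-1}; X_n)$. A secondary point is the legitimacy of the conditional lifting of Proposition~\ref{prop1a}, which reduces to the elementary fact that the identity is linear in entropies and hence survives conditioning and averaging.
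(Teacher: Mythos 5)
Your proposal is correct and follows essentially the same route as the paper's proof: a recursive application of Proposition~\ref{prop1a} (in its conditionalised form) that peels off one target variable at a time until the residual conditional O-information reaches two variables and vanishes. Your added care about the legitimacy of the conditional lifting and the termination step is sound but does not change the argument, which matches the paper's telescoping derivation exactly.
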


\begin{proof} 
Let us first use Proposition~\ref{eq:prop1a} to show that
\begin{equation}
    \Omega(\bm X) 
    = \; \Omega(\bm X^{n-1}|X_n) + \text{RSI}(\bm X^{n-1};X_n).\nonumber
\end{equation}
Then, one can proceed by applying again Proposition~\ref{eq:prop1a} to the resulting conditional O-information, 
as follows:
\begin{align}
    \Omega(\bm X) 
    =& \; \Omega(\bm X^{n-1}|X_n) + \text{RSI}(\bm X^{n-1};X_n) \nonumber \\
    =& \; \Omega(\bm X^{n-2}|
    \bm X_{n-1}^n) + \text{RSI}(\bm X^{n-2};X_{n-1}|X_n) \nonumber\\
    &+  \text{RSI}(\bm X^{n-1};X_n) \nonumber \\
    =& \; \dots \nonumber \\
    =& \; \sum_{j=2}^{n-1} \text{RSI}(\bm X^{j};X_{j+1} | \bm X_{j+2}^n).\nonumber
\end{align}
\end{proof}

\section{High-order metrics as evidence towards synergistic or redundant models}
\label{sec:log-likelihood}

In this section we derive interpretations of RSI and O-information in terms of log-likelihood ratios, inspired by the Neyman-Pearson theory of hypothesis testing. To this end, the first step is to consider different classes of distributions that satisfy specific structures of conditional independence as determined by specific graphical models. For this, let us define the \textit{tail-to-tail} class given by
\begin{equation}
    \mathcal{C}_\text{T} \coloneqq \left\{q(\bm X,Y) : q(\bm X,Y) = q(Y) \prod_{j=1}^{n} q(X_j|Y) \right\},
\end{equation}
which corresponds to the class of all distributions where $X_1,\dots,X_{n}$ are conditionally independent given $Y$ (i.e. arrows go from $Y$ to each $X_i$). 
Similarly, let's also define the \textit{head-to-head} class given by
\begin{equation}
    \mathcal{C}_\text{H} \coloneqq \left\{q(\bm X,Y) : q(\bm X,Y) = q(Y|\bm X) \prod_{j=1}^{n} q(X_j) \right\},
\end{equation}
which corresponds to the class of all distributions where each $X_1,\dots,X_n$ are independent but $Y$ depends (potentially) on all of them --- i.e., arrows are coming from each variable $X_1,\dots,X_{n}$ towards $Y$. The copy and \texttt{xor} systems belong to 
$\mathcal{C}_\text{T}$ and $\mathcal{C}_\text{H}$, respectively. Probabilistic graphical models corresponding for both classes are shown in Figure~\ref{fig:pgm}. 

Our next result shows that the high-order interdependencies of distributions in $\mathcal{C}_\text{H}$ and $\mathcal{C}_\text{T}$ have opposite properties.

\begin{lemma}\label{lemma:cool}
    Let $p$ be the joint distribution of $(\bm X,Y)$. Then, 
    \begin{align}
        \normalfont{\text{RSI}}(\bm X;Y) = \normalfont{\text{TC}}(\bm X) &\geq 0 \qquad \text{if }p\in\mathcal{C}_{\normalfont{\text{T}}},\\
        \normalfont{\text{RSI}}(\bm X;Y) = - \normalfont{\text{TC}}(\bm X|Y) &\leq 0 \qquad \text{if }p\in\mathcal{C}_{\normalfont{\text{H}}}.
    \end{align}
\end{lemma}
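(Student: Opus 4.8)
The plan is to reduce both statements to the single identity already established in Lemma~\ref{lemma:TCandDCT}, namely Eq.~\eqref{eq:rsi_TCDTC},
\begin{equation}
    \text{RSI}(\bm X;Y) = \text{TC}(\bm X) - \text{TC}(\bm X|Y),\nonumber
\end{equation}
and then to observe that each of the two conditional-independence structures forces exactly one of the two total-correlation terms to vanish. So the whole argument amounts to identifying which term dies in each class and then invoking the non-negativity of the (conditional) total correlation.

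First I would handle the tail-to-tail case, $p\in\mathcal{C}_{\text{T}}$. Here $X_1,\dots,X_n$ are conditionally independent given $Y$, so recalling that the conditional total correlation is defined by averaging Eq.~\eqref{eq:def_TC} over $Y$, one has $\text{TC}(\bm X|Y) = \sum_{j=1}^n H(X_j|Y) - H(\bm X|Y)$, and conditional independence gives $H(\bm X|Y) = \sum_{j=1}^n H(X_j|Y)$, hence $\text{TC}(\bm X|Y)=0$. Substituting into the identity yields $\text{RSI}(\bm X;Y) = \text{TC}(\bm X)$, which is non-negative because the total correlation is non-negative (a property stated earlier in the excerpt).

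Next I would treat the head-to-head case, $p\in\mathcal{C}_{\text{H}}$, by the symmetric argument. Now $X_1,\dots,X_n$ are unconditionally independent, so $H(\bm X)=\sum_{j=1}^n H(X_j)$ and therefore $\text{TC}(\bm X)=0$ directly from Eq.~\eqref{eq:def_TC}. Substituting into the identity gives $\text{RSI}(\bm X;Y) = -\,\text{TC}(\bm X|Y)$, which is non-positive since the conditional total correlation, being an average of non-negative total correlations over the values of $Y$, is itself non-negative.

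I do not expect any genuine obstacle here: the result is essentially a corollary of Lemma~\ref{lemma:TCandDCT} together with the vanishing of the total correlation under independence. The only points demanding a little care are (i) writing the conditional total correlation explicitly via the averaging convention mentioned at the end of Section~\ref{sec:preliminaries}, so that the conditional-independence factorisation can be applied cleanly, and (ii) making sure the non-negativity is invoked for the correct term in each class. Everything else is a direct substitution.
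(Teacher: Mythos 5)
Your proposal is correct and follows exactly the paper's own argument: substitute into Eq.~\eqref{eq:rsi_TCDTC} and note that $\text{TC}(\bm X|Y)=0$ for $p\in\mathcal{C}_{\text{T}}$ and $\text{TC}(\bm X)=0$ for $p\in\mathcal{C}_{\text{H}}$, then invoke non-negativity of the (conditional) total correlation. The only difference is that you spell out the factorisation-of-entropies steps that the paper leaves implicit.
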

\begin{proof}
    The result follows from Eq.~\eqref{eq:rsi_TCDTC} using the fact that $\text{TC}(\bm X|Y)= 0$ for $p\in\mathcal{C}_\text{T}$ and 
    $\text{TC}(\bm X)= 0$ for $p\in\mathcal{C}_\text{H}$.
\end{proof}

Intuitively, in tail-to-tail distributions any correlations within $\bm X$ are mediated by $Y$, and hence this information is necessarily redundant. 
On the contrary, in head-to-head distributions the $X_i$'s are (marginally) independent and may only become entangled when conditioned on $Y$, and hence these relationships are synergistic (as they involve at least three variables, $X_i$, $X_j$, and $Y$).

Let us now consider a dataset $\mathcal{S}_m$ of $m$ i.i.d. points of the form $\{(\bm x^{(k)},y^{(k)})\}_{k=1}^m$ sampled from a distribution $p(\bm X, Y)$. The likelihood of a distribution $q(\bm x,y)$ evaluated on the sample $\mathcal{S}_m$ is

\begin{equation}
    \Lambda(\mathcal{S}_m;q) \coloneqq \prod_{k=1}^m q(\bm x^{(k)},y^{(k)}).
\end{equation}

Using this definition, one can express the {\it generalized likelihood ratio test} (GLRT) as defined in Ref.~\cite[p.~86-96]{VanTrees} between two classes of models $\mathcal{C}$ and $\mathcal{C}'$, which is given by

\begin{equation}
    \lambda_{\mathcal{C}'}^{\mathcal{C}} (\mathcal{S}_m)
    \coloneqq \log \frac{ \sup_{r\in\mathcal{C}} \Lambda(\mathcal{S}_m;r)}{ \sup_{s\in\mathcal{C}'} \Lambda(\mathcal{S}_m;s)}.
\end{equation}

The GLRT is a well-known test for composite hypothesis testing.\footnote{When $\mathcal{C}$ and $\mathcal{C}'$ are point hypotheses, the Neyman-Pearson lemma shows that the GLRT is the optimal (i.e. most powerful) test. However, in general the GLRT cannot be shown to be an optimal test for composite hypotheses. See e.g. the discussion in Ref.~\cite{VanTrees}.} In fact, it can be interpreted as a comparison between the evidence towards accepting the hypothesis that $p\in\mathcal{C}$ versus $p\in\mathcal{C}'$ --- as $\lambda_{\mathcal{C}'}^{\mathcal{C}} (\mathcal{S}_m) = \lambda^{\mathcal{M}}_{\mathcal{C}'} (\mathcal{S}_m) - \lambda^{\mathcal{M}}_{\mathcal{C}} (\mathcal{S}_m)$ where $\mathcal{M}$ is the class of all distributions $p(\bm X, Y)$.\footnote{In the typical Neyman-Pearson setting, $\lambda^{\mathcal{M}}_{\mathcal{C}}$ quantifies the evidence in favour of the alternative hypothesis $p\in\mathcal{M}\setminus\mathcal{C}$ against the null hypothesis $p\in\mathcal{C}$.}
The following proposition establishes that the RSI corresponds to the GLRT between the model classes $\mathcal{C}_\text{T}$ and $\mathcal{C}_\text{H}.$

\begin{figure}
    \centering
    \begin{tikzpicture}

    \tikzstyle{mylatent} = [circle,fill=white,draw=black,inner sep=1pt,
minimum size=17pt, font=\fontsize{10}{10}\selectfont, node distance=1, anchor=center, align=center]

    \def\vsep{17mm}
    
\matrix[matrix of nodes,
  name=mat,
  nodes={anchor=center,
         align=center,
         font=\fontsize{10}{1}\selectfont},
  column sep=0.34cm,
  row sep=0.1cm]{

Tail-to-tail models $\mathcal{C}_\text{T}$

&

Head-to-head models $\mathcal{C}_\text{H}$

\\

\begin{tikzpicture}
\node[mylatent] (y1) {\footnotesize $X_1$};
\node[mylatent, right=0.3cm of y1] (y2) {\footnotesize $X_2$};
\node[mylatent, right=0.3cm of y2] (y3) {\footnotesize $\cdots$};
\node[mylatent, right=0.3cm of y3] (y4) {\footnotesize $X_n$};
\node[fit=(y1) (y2) (y3) (y4), inner sep=0] (naiveup) {};
\coordinate (p) at (0,+\vsep);
\node[mylatent] (y5) at (p -| naiveup) {\footnotesize $Y$};

\edge {y5} {y1.north};
\edge {y5} {y2.north};
\edge {y5} {y3.north};
\edge {y5} {y4.north};
\end{tikzpicture}

&

\begin{tikzpicture}
\node[mylatent] (u1) {\footnotesize $X_1$};
\node[mylatent, right=0.3cm of u1] (u2) {\footnotesize $X_2$};
\node[mylatent, right=0.3cm of u2] (u3) {\footnotesize $\cdots$};
\node[mylatent, right=0.3cm of u3] (u4) {\footnotesize $X_n$};
\node[fit=(u1) (u2) (u3) (u4), inner sep=0] (naiveup) {};
\coordinate (p) at (0,-\vsep);
\node[mylatent] (u5) at (p -| naiveup) {\footnotesize $Y$};

\edge {u1} {u5};
\edge {u2} {u5};
\edge {u3} {u5};
\edge {u4} {u5};
\end{tikzpicture}

\\

\small Predominantly redundant
&
\small Predominantly synergistic
\\
};

    \end{tikzpicture}
    \caption{Classes of probabilistic graphical models with predominantly redundant ($\mathcal{C}_\text{T}$) and synergistic ($\mathcal{C}_\text{H}$) interdependencies.}
    \label{fig:pgm}
\end{figure}
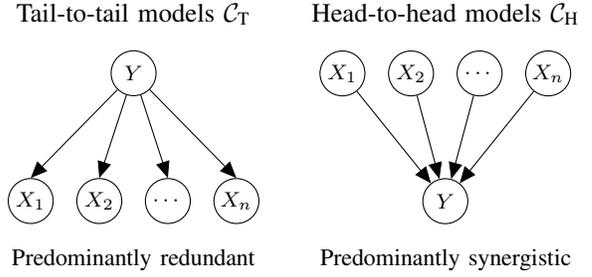

\begin{proposition}\label{prop_LLR}
Given a sequence of sets of i.i.d. samples $\mathcal{S}_m$, then the GLRT $\lambda_{\mathcal{C}_{\normalfont{\text{H}}}}^{\mathcal{C}_{\normalfont{\text{T}}}}/m$ converges in probability to the RSI:
\begin{equation}\label{eq:res_LRR}
\frac{1}{m}\lambda_{\mathcal{C}_{\normalfont{\text{H}}}}^{\mathcal{C}_{\normalfont{\text{T}}}}(\mathcal{S}_m) 
\xrightarrow[m \to \infty]{}
\normalfont{\text{RSI}}(\bm X;Y).
\end{equation}

\end{proposition}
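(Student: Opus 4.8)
The plan is to show that, for every finite $m$, the normalised GLRT is \emph{exactly} the RSI of the empirical distribution of the sample, and then to pass to the limit via convergence of that empirical distribution. Writing $\hat{p}_m$ for the empirical distribution of $\mathcal{S}_m$, the first step is the identity
\[
\frac{1}{m}\log\Lambda(\mathcal{S}_m;q) = \sum_{\bm x,y}\hat{p}_m(\bm x,y)\log q(\bm x,y) = -H(\hat{p}_m) - \KL{\hat{p}_m}{q},
\]
so that $\sup_{q\in\mathcal{C}}\tfrac{1}{m}\log\Lambda(\mathcal{S}_m;q) = -H(\hat{p}_m) - \inf_{q\in\mathcal{C}}\KL{\hat{p}_m}{q}$; that is, maximising the likelihood over a class amounts to an information projection of $\hat{p}_m$ onto that class. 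The crucial structural fact to establish is that, for the factorised classes $\mathcal{C}_\text{T}$ and $\mathcal{C}_\text{H}$, this projection is attained in closed form by the member whose factors coincide with the matching marginals of $\hat{p}_m$.

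Next I would compute the two suprema explicitly. For $\mathcal{C}_\text{T}$ the maximiser is $\hat{p}_m(y)\prod_j\hat{p}_m(x_j\mid y)$, yielding
\[
\sup_{q\in\mathcal{C}_\text{T}}\frac{1}{m}\log\Lambda(\mathcal{S}_m;q) = -H_{\hat{p}_m}(Y) - \sum_{j=1}^n H_{\hat{p}_m}(X_j\mid Y),
\]
where all entropies are those of $\hat{p}_m$; symmetrically, for $\mathcal{C}_\text{H}$ the maximiser is $\hat{p}_m(y\mid\bm x)\prod_j\hat{p}_m(x_j)$, giving $-\sum_j H_{\hat{p}_m}(X_j) - H_{\hat{p}_m}(Y\mid\bm X)$. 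The only care needed is that wherever $\hat{p}_m>0$ the matching-marginal product is strictly positive, so that these maximisers are genuine members of the classes and the plug-in values are finite.

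Subtracting, the joint term $H(\hat{p}_m)$ cancels and the remainder reorganises as
\begin{align}
\frac{1}{m}\lambda_{\mathcal{C}_\text{H}}^{\mathcal{C}_\text{T}}(\mathcal{S}_m)
&= \sum_{j=1}^n\big[H_{\hat{p}_m}(X_j) - H_{\hat{p}_m}(X_j\mid Y)\big] - \big[H_{\hat{p}_m}(Y) - H_{\hat{p}_m}(Y\mid\bm X)\big] \nonumber\\
&= \sum_{j=1}^n I_{\hat{p}_m}(X_j;Y) - I_{\hat{p}_m}(\bm X;Y), \nonumber
\end{align}
which by Eq.~\eqref{eq:def_RSI} is precisely $\text{RSI}(\bm X;Y)$ evaluated on $\hat{p}_m$.

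Finally I would let $m\to\infty$. For discrete alphabets each cell probability $\hat{p}_m(\bm x,y)$ converges in probability to $p(\bm x,y)$ by the weak law of large numbers, and every entropy and mutual-information term above is a continuous function of the finitely many cell probabilities; the continuous mapping theorem then gives $\text{RSI}$ at $\hat{p}_m$ converging in probability to $\text{RSI}(\bm X;Y)$ at $p$, as claimed. I expect the main obstacle to be the middle step --- verifying that the supremum over each factorised class is attained at the marginal-matching distribution and equals the stated sum of empirical entropies --- since this is where the conditional-independence structure of $\mathcal{C}_\text{T}$ and $\mathcal{C}_\text{H}$ is essential; the limiting argument is then routine.
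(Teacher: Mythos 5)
Your proof is correct, and it takes a genuinely different (and in one respect cleaner) route than the paper's. The paper works with $\lambda_\text{T}$ and $\lambda_\text{H}$ separately, first interchanging $\sup$ with $\lim_{m\to\infty}$, then invoking the weak law of large numbers, and only afterwards applying Gibbs' inequality to identify the maximisers as the marginal-matching distributions built from the \emph{true} $p$ --- arriving at $-H(Y)-\sum_j H(X_j|Y)$ and $-H(Y|\bm X)-\sum_j H(X_j)$ and subtracting. You instead establish the exact finite-sample identity $\tfrac{1}{m}\lambda_{\mathcal{C}_\text{H}}^{\mathcal{C}_\text{T}}(\mathcal{S}_m)=\text{RSI}_{\hat{p}_m}(\bm X;Y)$ by performing the same Gibbs-inequality projection on the \emph{empirical} measure $\hat{p}_m$, and then pass to the limit by continuity of entropy on the finite simplex plus the continuous mapping theorem. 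What your approach buys: it sidesteps the paper's step $(a)$, where $\sup$ and $\lim$ are commuted without justification (this interchange is not valid in general and is only salvaged here because the supremum is attained in closed form --- precisely the fact you verify explicitly); it also yields the stronger statement that the normalised GLRT is the plug-in RSI estimator at every $m$, and your identity $\sup_{q\in\mathcal{C}}\tfrac{1}{m}\log\Lambda(\mathcal{S}_m;q)=-H(\hat{p}_m)-\inf_{q\in\mathcal{C}}\KL{\hat{p}_m}{q}$ anticipates the information-geometric reading that the paper only reaches later in Corollary~\ref{cor:RSI_KL}. What the paper's route buys is brevity and a form that extends verbally to the continuous case, where your empirical-measure argument would need reworking. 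Your attention to the support issue (the marginal-matching product being positive wherever $\hat{p}_m>0$) is a detail the paper glosses over and is handled correctly.
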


\begin{proof}
Our strategy for proving this result will be to split $\lambda_{\mathcal{C}_\text{H}}^{\mathcal{C}_\text{T}}$ into two additive terms, show that each one converges, and then show that their addition gives the desired result. For this, let's use the shorthand notation
\begin{equation}
    \lambda_\text{T} \coloneqq \log \sup_{r\in\mathcal{C}_\text{T}} \Lambda(\mathcal{S}_m; r) \:\:\:\text{and}\:\:\:
    \lambda_\text{H} \coloneqq \log \sup_{s\in\mathcal{C}_\text{H}} \Lambda(\mathcal{S}_m; s),
\end{equation}
so that $\lambda_{\mathcal{C}_\text{H}}^{\mathcal{C}_\text{T}}(\mathcal{S}_m) = \lambda_\text{T} - \lambda_\text{H}$.

Let us first work out an expression for $\lambda_\text{T}$. For this, let us first note that

\begin{align}
    \lim_{m \to \infty} \frac{1}{m} \lambda_\text{T} 
    \stackrel{(a)}{=}& \sup_{q\in\mathcal{C}_\text{T}} \lim_{m \to \infty} \frac{1}{m} \sum_{k=1}^m \log q(\bm x^{(k)},y^{(k)}) \\\
    \stackrel{(b)}{=}& \sup_{q\in\mathcal{C}_\text{T}} \E[p]{\log q(\bm X, Y)} \\
    \stackrel{(c)}{=}& \sup_{q\in\mathcal{C}_\text{T}} \! \left\{\!\E[p]{\log q(Y)} + \!\sum_{j=1}^n \E[p]{\log q(X_j|Y)} \right\}. \nonumber 
\end{align}

where $(a)$ uses the fact that $\sup$ commutes with $\log$, $\lim$, and product; $(b)$ the weak law of large numbers; and $(c)$ the fact that
$q\in\mathcal{C}_T$.

From there, one can see that, by Gibbs' inequality, the suprema above are achieved when $q(Y) = p(Y)$ and $q(X_j|Y)=p(X_j|Y) \; \forall j$, yielding
\begin{align}
\lim_{m\to\infty}\frac{1}{m}\lambda_\text{T} 
&= -H(Y) - \sum_{j=1}^n H(X_j|Y).
\end{align}
By following an analogous derivation, one can show that
\begin{equation}
\lim_{m\to\infty}\frac{1}{m}\lambda_\text{H} = -H(Y|\bm X) - \sum_{j=1}^n H(X_j).
\end{equation}
The result then follows by taking the difference between these two limits.
\end{proof}

\begin{corollary}\label{cor:RSI_KL}
    \begin{align}
        \normalfont{\text{RSI}}(\bm X;Y) 
        &= \lim_{m\to\infty} \frac{1}{m} \lambda^{\mathcal{M}}_{\mathcal{C}_{\normalfont{\text{H}}}}(\mathcal{S}_m)
        - \lim_{m\to\infty} \frac{1}{m} \lambda^\mathcal{M}_{\mathcal{C}_{\normalfont{\text{T}}}}(\mathcal{S}_m)
        \label{eq:LRR_diff}\\[5pt]
        &= \;\inf_{q\in\mathcal{C}_{\normalfont{\text{H}}}}
        \KL{p}{q}
        - \inf_{q'\in\mathcal{C}_{\normalfont{\text{T}}}} 
        \KL{p}{q'},\label{eq:KL_diff}
    \end{align}
    where $\mathcal{M}$ denotes the collection of all distributions $p(\bm x,y)$.
\end{corollary}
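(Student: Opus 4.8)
The plan is to treat the two displayed equalities separately. For the first equality \eqref{eq:LRR_diff}, I would simply invoke the additive decomposition of the GLRT noted right after its definition, namely $\lambda_{\mathcal{C}_\text{H}}^{\mathcal{C}_\text{T}}(\mathcal{S}_m) = \lambda^{\mathcal{M}}_{\mathcal{C}_\text{H}}(\mathcal{S}_m) - \lambda^{\mathcal{M}}_{\mathcal{C}_\text{T}}(\mathcal{S}_m)$, which holds because the unrestricted maximum-likelihood term $\sup_{r\in\mathcal{M}}\Lambda(\mathcal{S}_m;r)$ appears in both $\lambda^{\mathcal{M}}_{\mathcal{C}_\text{H}}$ and $\lambda^{\mathcal{M}}_{\mathcal{C}_\text{T}}$ and hence cancels in the difference. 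Dividing by $m$, letting $m\to\infty$, and applying Proposition~\ref{prop_LLR} to the left-hand side then yields \eqref{eq:LRR_diff} directly, with no extra work.

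The substance lies in the second equality \eqref{eq:KL_diff}, for which I would establish the single general identity $\lim_{m\to\infty}\tfrac{1}{m}\lambda^{\mathcal{M}}_{\mathcal{C}}(\mathcal{S}_m) = \inf_{q\in\mathcal{C}}\KL{p}{q}$, valid for any model class $\mathcal{C}$, and then specialise it to $\mathcal{C}_\text{H}$ and $\mathcal{C}_\text{T}$. Writing $\lambda^{\mathcal{M}}_{\mathcal{C}} = \log\sup_{r\in\mathcal{M}}\Lambda(\mathcal{S}_m;r) - \log\sup_{s\in\mathcal{C}}\Lambda(\mathcal{S}_m;s)$, I would control each term with the same weak-law-of-large-numbers plus Gibbs' inequality argument already used in Proposition~\ref{prop_LLR}. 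The constrained term gives $\lim_m\tfrac{1}{m}\log\sup_{s\in\mathcal{C}}\Lambda = \sup_{q\in\mathcal{C}}\E[p]{\log q(\bm X,Y)}$, exactly as before. The unconstrained term is new but simpler: the supremum of $\E[p]{\log r(\bm X,Y)}$ over all distributions $r\in\mathcal{M}$ is attained at $r=p$ by Gibbs' inequality, so $\lim_m\tfrac{1}{m}\log\sup_{r\in\mathcal{M}}\Lambda = \E[p]{\log p(\bm X,Y)} = -H(\bm X,Y)$.

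Subtracting the two limits and recognising the result as a relative entropy then completes the identity:
\begin{align}
\lim_{m\to\infty}\tfrac{1}{m}\lambda^{\mathcal{M}}_{\mathcal{C}}
&= \E[p]{\log p(\bm X,Y)} - \sup_{q\in\mathcal{C}}\E[p]{\log q(\bm X,Y)} \nonumber\\
&= \inf_{q\in\mathcal{C}}\E[p]{\log\tfrac{p(\bm X,Y)}{q(\bm X,Y)}} = \inf_{q\in\mathcal{C}}\KL{p}{q}. \nonumber
\end{align}
Applying this with $\mathcal{C}=\mathcal{C}_\text{H}$ and then $\mathcal{C}=\mathcal{C}_\text{T}$ and subtracting yields \eqref{eq:KL_diff}, which together with \eqref{eq:LRR_diff} proves the corollary.

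The only delicate point is the interchange of the limit with the supremum over the model class (step $(a)$ in Proposition~\ref{prop_LLR}), i.e. the assertion that the normalised maximised log-likelihood converges to the supremum of the expected log-likelihood. This is precisely the interchange already taken for granted in Proposition~\ref{prop_LLR}, so no new technical burden is incurred here; the genuinely new ingredients are merely the observation that the unrestricted maximum-likelihood term converges to $-H(\bm X,Y)$ and the algebraic repackaging of the difference of expected log-likelihoods as an infimum of KL divergences.
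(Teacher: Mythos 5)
Your proposal is correct and follows essentially the same route as the paper: the first equality via the cancellation of the unrestricted maximum-likelihood term in $\lambda^{\mathcal{M}}_{\mathcal{C}_{\text{H}}} - \lambda^{\mathcal{M}}_{\mathcal{C}_{\text{T}}}$ together with Proposition~\ref{prop_LLR}, and the second via the identity $\lim_{m\to\infty}\tfrac{1}{m}\lambda^{\mathcal{M}}_{\mathcal{C}} = \sup_{s\in\mathcal{M}}\E[p]{\log s(\bm X,Y)} - \sup_{r\in\mathcal{C}}\E[p]{\log r(\bm X,Y)} = \inf_{q\in\mathcal{C}}\KL{p}{q}$. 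The only difference is cosmetic: you spell out that the unconstrained supremum is attained at $p$ and equals $-H(\bm X,Y)$, which the paper leaves implicit.
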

\begin{proof}
    The first equality follows from Eq.~\eqref{eq:res_LRR} and the fact that $\lambda_{\mathcal{C}_\text{H}}^{\mathcal{C}_\text{T}} = \lambda^{\mathcal{M}}_{\mathcal{C}_\text{H}} - \lambda^{\mathcal{M}}_{\mathcal{C}_\text{T}}$. 
    To prove the second equality, note that
    \begin{align}
        \lim_{m\to\infty}\frac{1}{m} \lambda^{\mathcal{M}}_{\mathcal{C}_\text{H}}
        =& \sup_{s\in\mathcal{M}} 
        \E[p]{\log s(\bm X, Y)}
        - \!\sup_{r\in\mathcal{C}_\text{H}} 
        \E[p]{\log r(\bm X, Y)}\nonumber\\
        =& \inf_{r\in\mathcal{C}_\text{H}} \KL{p}{r}.\nonumber
    \end{align}
The desired result is obtained by following a similar derivation to show that $\frac{1}{m} \lambda^{\mathcal{M}}_{\mathcal{C}_\text{T}} \xrightarrow[m \to \infty]{} \inf_{s\in\mathcal{C}_\text{T}} \KL{p}{s}$.
\end{proof}

These results show that the $\text{RSI}$ can be understood as a comparison between the evidence towards accepting the hypothesis $p\in\mathcal{C}_\text{T}$ versus $p\in\mathcal{C}_\text{H}$. 
Additionally, Corollary~\ref{cor:RSI_KL} also brings a geometric interpretation: following the principles of information geometry~\cite{amari2016information}, $\inf_{q\in\mathcal{C}}\KL{p}{q}$ corresponds to the length of the M-projection of $p$ on the statistical manifold $\mathcal{C}$. Therefore, the $\text{RSI}$ compares the lengths of the projections of $p$ to either $\mathcal{C}_\text{T}$ and $\mathcal{C}_\text{H}$: if $p$ is closer to being tail-to-tail, then it is predominantly redundant and its RSI is positive; whereas if $p$ is closer to being head-to-head, then it is predominantly synergistic and its RSI is negative.

 Interestingly, note that if $p\in\mathcal{C}_\text{T}$ then combining Corollary~\ref{cor:RSI_KL} and Lemma~\ref{lemma:cool} one can find that
 \begin{equation}
 \text{RSI}(\bm X;Y) = \inf_{q\in\mathcal{C}_\text{H}} \KL{p}{q} =\text{TC}(\bm X),
 \end{equation}
 showing that in this case $\text{TC}(\bm X)$ is equal to the information-geometric projection of $p$ to $\mathcal{C}_\text{H}$. Conversely, if $p\in\mathcal{C}_\text{H}$ then 
\begin{equation}
 -\text{RSI}(\bm X;Y) = \inf_{q'\in\mathcal{C}_\text{T}} \KL{p}{q'} =\text{TC}(\bm X|Y),
 \end{equation}
so in that case $\text{TC}(\bm X|Y)$ measures the reciprocal projection.

Proposition~\ref{prop_LLR} allows us to develop an interpretation of the O-information as the sum of log-likelihood ratios. For this, we define the following two classes of models:
\begin{align}
    \mathcal{K}_\text{T}^j &= \Big\{q(\bm X) : q(\bm X) = q(X_j) q(\bm X^{j-1}|X_j) q(\bm X_{j+1}^n|X_j) \Big\},\nonumber\\
    \mathcal{K}_\text{H}^j &= \Big\{q(\bm X) : q(\bm X) = q(X_j|\bm X_{-j}) q(\bm X^{j-1}) q(\bm X_{j+1}^n) \Big\}.\nonumber
\end{align}
Each class corresponds to models that split the system into three parts ($\bm X^{j-1}$, $\bm X_{j+1}^n$, and $X_j$) and consider only redundant or synergistic relationships between them.

\begin{proposition}\label{cor:oinfo_LLR}
Given a sequence of sets of i.i.d. samples $\mathcal{S}_m$, then 
\begin{equation}
   \frac{1}{m}\sum_{j=2}^{n-1}\lambda_{\mathcal{K}_{\normalfont{\text{H}}}^j}^{\mathcal{K}_{\normalfont{\text{T}}}^j}(\mathcal{S}_m)
    \xrightarrow[m \to \infty]{}
    \Omega(\bm X).
\end{equation}
\end{proposition}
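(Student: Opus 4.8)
The plan is to recognise that, for each fixed $j$, the summand $\lambda_{\mathcal{K}_{\text{H}}^j}^{\mathcal{K}_{\text{T}}^j}$ is \emph{exactly} the GLRT analysed in Proposition~\ref{prop_LLR}, only after relabelling the variables of the three-part system $(\bm X^{j-1},\bm X_{j+1}^n;X_j)$. Concretely, I would compare the defining factorisations: $\mathcal{K}_\text{T}^j$ demands that the two blocks $\bm X^{j-1}$ and $\bm X_{j+1}^n$ be conditionally independent given $X_j$, which is precisely the tail-to-tail class $\mathcal{C}_\text{T}$ with $X_j$ in the role of the target $Y$ and the two blocks in the role of the sources; similarly $\mathcal{K}_\text{H}^j$ demands that the two blocks be marginally independent while $X_j$ depends on both, which is precisely the head-to-head class $\mathcal{C}_\text{H}$ under the same relabelling.

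With this identification in hand, I would invoke Proposition~\ref{prop_LLR} to conclude that each normalised summand converges in probability,
\begin{equation}
\frac{1}{m}\lambda_{\mathcal{K}_{\text{H}}^j}^{\mathcal{K}_{\text{T}}^j}(\mathcal{S}_m)
\xrightarrow[m\to\infty]{}
\text{RSI}\big((\bm X^{j-1},\bm X_{j+1}^n);X_j\big).\nonumber
\end{equation}
Since the right-hand side is the RSI of a system with two (aggregate) sources and one target, Corollary~\ref{cor:equal_n3} --- or, equivalently, a one-line computation combining the mutual-information chain rule with Eq.~\eqref{eq:interaction_info} --- identifies it with the interaction information $I(\bm X^{j-1};X_j;\bm X_{j+1}^n)$. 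Summing over $j=2,\dots,n-1$ and using that a finite sum of terms converging in probability converges to the sum of the limits, I would obtain
\begin{equation}
\frac{1}{m}\sum_{j=2}^{n-1}\lambda_{\mathcal{K}_{\text{H}}^j}^{\mathcal{K}_{\text{T}}^j}(\mathcal{S}_m)
\xrightarrow[m\to\infty]{}
\sum_{j=2}^{n-1} I(\bm X^{j-1};X_j;\bm X_{j+1}^n)
= \Omega(\bm X),\nonumber
\end{equation}
where the final equality is the O-information decomposition of Eq.~\eqref{eq:oinfo_dec}.

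The only point that genuinely needs checking is this first reduction: that the derivation of Proposition~\ref{prop_LLR} is insensitive to whether the sources are scalar or vector-valued. I expect this to be immediate, since that proof uses only the product factorisation of the two model classes together with Gibbs' inequality, neither of which refers to the dimensionality of the individual source blocks; the relevant suprema are still attained by matching the appropriate marginals and conditionals of $p$, now with $\bm X^{j-1}$ and $\bm X_{j+1}^n$ treated as single multivariate variables. I therefore anticipate no real obstacle beyond stating the reduction cleanly and, if desired, spelling out the short passage from the two-source RSI to the interaction information.
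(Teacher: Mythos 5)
Your proposal is correct and follows essentially the same route as the paper, which likewise obtains the result by applying Proposition~\ref{prop_LLR} to each three-part cut $(\bm X^{j-1},\bm X_{j+1}^n;X_j)$, using the equality of RSI and interaction information for three (possibly vector-valued) arguments, and summing via Eq.~\eqref{eq:oinfo_dec}. You simply spell out the reduction --- including the harmless point about vector-valued sources --- in more detail than the paper's one-line proof.
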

\begin{corollary}
\begin{align}
    \Omega(\bm X) 
    &= \sum_{j=2}^{n-1} \Bigg( \lim_{m\to\infty} \frac{1}{m} \lambda^{\mathcal{M}}_{\mathcal{K}^j_{\normalfont{\text{H}}}}(\mathcal{S}_m)
    - \lim_{m\to\infty} \frac{1}{m} \lambda^\mathcal{M}_{\mathcal{K}^j_{\normalfont{\text{T}}}} (\mathcal{S}_m)\Bigg)\nonumber\\
    &=\;\sum_{j=2}^{n-2} \Bigg(\inf_{q\in\mathcal{K}^j_{\normalfont{\text{H}}}}
        \KL{p}{q} 
    - \inf_{q'\in\mathcal{K}^j_{\normalfont{\text{T}}}} 
        \KL{p}{q'} \Bigg).\nonumber
\end{align}
\end{corollary}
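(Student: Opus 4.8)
The plan is to obtain both equalities of the Corollary by transporting, essentially verbatim, the machinery already built for the RSI in Proposition~\ref{prop_LLR} and Corollary~\ref{cor:RSI_KL}. The foundation is Proposition~\ref{cor:oinfo_LLR}, which I may assume; for completeness I first recall why it holds, since the same reduction underlies the obstacle below. The observation that drives everything is that, for each fixed $j$, the pair $(\mathcal{K}_\text{T}^j,\mathcal{K}_\text{H}^j)$ is exactly the tail-to-tail/head-to-head pair $(\mathcal{C}_\text{T},\mathcal{C}_\text{H})$ of a three-part system whose ``target'' is $X_j$ and whose two ``sources'' are the composite variables $A:=\bm X^{j-1}$ and $B:=\bm X_{j+1}^n$. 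Indeed, the factorisation $q(X_j)\,q(\bm X^{j-1}|X_j)\,q(\bm X_{j+1}^n|X_j)$ defining $\mathcal{K}_\text{T}^j$ is of the form $q(Y)\prod_i q(\text{source}_i|Y)$ with $Y=X_j$, and likewise $\mathcal{K}_\text{H}^j$ matches $\mathcal{C}_\text{H}$ since $\bm X_{-j}=(\bm X^{j-1},\bm X_{j+1}^n)$. Hence Proposition~\ref{prop_LLR} applied to this two-source system gives $\frac{1}{m}\lambda_{\mathcal{K}_\text{H}^j}^{\mathcal{K}_\text{T}^j}\to\text{RSI}(A,B;X_j)=I(\bm X^{j-1};X_j;\bm X_{j+1}^n)$, the last step being Corollary~\ref{cor:equal_n3} for three parts; summing over $j$ and invoking Eq.~\eqref{eq:oinfo_dec} of Lemma~\ref{lemma:decompositions} recovers $\Omega(\bm X)$, which is Proposition~\ref{cor:oinfo_LLR}.

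For the first equality of the Corollary, I would use the identity $\lambda_{\mathcal{C}'}^{\mathcal{C}}=\lambda^{\mathcal{M}}_{\mathcal{C}'}-\lambda^{\mathcal{M}}_{\mathcal{C}}$ noted below the definition of the GLRT, specialised to $\mathcal{C}=\mathcal{K}_\text{T}^j$ and $\mathcal{C}'=\mathcal{K}_\text{H}^j$, to write each summand as $\lambda^{\mathcal{M}}_{\mathcal{K}_\text{H}^j}-\lambda^{\mathcal{M}}_{\mathcal{K}_\text{T}^j}$. Dividing by $m$, passing to the limit termwise (the individual limits existing by the argument of the next paragraph), and summing over $j$, Proposition~\ref{cor:oinfo_LLR} then equates the total with $\Omega(\bm X)$, which is the first line.

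For the second equality, I would replicate the computation in the proof of Corollary~\ref{cor:RSI_KL} for each class $\mathcal{C}\in\{\mathcal{K}_\text{T}^j,\mathcal{K}_\text{H}^j\}$. Writing $\frac{1}{m}\lambda^{\mathcal{M}}_{\mathcal{C}}=\frac{1}{m}\log\sup_{s\in\mathcal{M}}\Lambda(\mathcal{S}_m;s)-\frac{1}{m}\log\sup_{r\in\mathcal{C}}\Lambda(\mathcal{S}_m;r)$, the same interchange of the supremum with $\log$, $\lim$ and product (step $(a)$ of Proposition~\ref{prop_LLR}) together with the weak law of large numbers yields $\sup_{s\in\mathcal{M}}\E[p]{\log s(\bm X)}-\sup_{r\in\mathcal{C}}\E[p]{\log r(\bm X)}$. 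Since the unconstrained supremum is attained at $s=p$ by Gibbs' inequality, this equals $\inf_{r\in\mathcal{C}}\bigl(\E[p]{\log p(\bm X)}-\E[p]{\log r(\bm X)}\bigr)=\inf_{r\in\mathcal{C}}\KL{p}{r}$. Substituting these limits into the first equality gives the second, with both summation ranges read as $j=2,\dots,n-1$ (so that the $n-2$ upper index in the displayed statement should read $n-1$ for consistency).

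The only genuinely non-routine point, and hence the step I would watch most carefully, is the justification of step $(a)$ — the interchange of the supremum with the limit and the product, together with the attainment of the constrained suprema — for the structured classes $\mathcal{K}_\text{T}^j,\mathcal{K}_\text{H}^j$ rather than for point hypotheses. But once $\bm X^{j-1}$ and $\bm X_{j+1}^n$ are regarded as single composite variables, these classes carry exactly the product factorisation of $\mathcal{C}_\text{T},\mathcal{C}_\text{H}$, so the factor-by-factor Gibbs optimisation and the interchange transfer verbatim from the proof of Proposition~\ref{prop_LLR}; no new analytic input is required.
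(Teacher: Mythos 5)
Your proposal is correct and follows essentially the same route as the paper, whose own proof is simply ``direct from applying the previous results to Eq.~\eqref{eq:oinfo_dec}, using the fact that RSI and interaction information coincide for three variables'' --- i.e.\ exactly your reduction of each $(\mathcal{K}_\text{T}^j,\mathcal{K}_\text{H}^j)$ pair to a two-source instance of $(\mathcal{C}_\text{T},\mathcal{C}_\text{H})$ followed by the GLRT identity and the KL computation of Corollary~\ref{cor:RSI_KL}. Your observation that the upper summation index $n-2$ in the second line should read $n-1$ for consistency with Eq.~\eqref{eq:oinfo_dec} is also well taken.
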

\begin{proof}
Direct from applying the previous results to Eq.~\eqref{eq:oinfo_dec}, and using the fact that the RSI and the interaction information are equal when considering three variables.
\end{proof}

Therefore, like the $\text{RSI}$, the O-information corresponds to a balance between the evidence supporting the hypothesis that the distribution is dominated by redundant interdependencies (as the ones in $\mathcal{C}_\text{T}$) versus synergistic ones (as the distributions in $\mathcal{C}_\text{H}$). However, here the evidence is accumulated over a sequence of cuts, each considering a different variable $X_j$ as target, and splitting the rest of the system into two parts that are treated as predictors. That said, note that the expression on the right-hand-side of the identities in Corollary~\ref{cor:oinfo_LLR} does not depend on the ordering of the X's.

\section{Conclusion}

The results presented here provide a first view on the relationship between the RSI and the O-information, which are examples of directed and undirected metrics of high-order interdependencies. In particular, Proposition~\ref{prop1a} shows the the RSI is capturing the high-order effects on the predictors and targets that are not found in the predictors, and Corollary~\ref{cor:o_by_rsi} shows that the O-information corresponds to the sum of the RSI applied to different partitions of the system. 
Furthermore, Proposition~\ref{prop_LLR} and its corollaries illuminate the RSI and O-information in terms of generalised log-likelihood ratios between synergistic and redundant models. 
Overall, these results deepen our understanding of these metrics and provide new ways to interpret their values.

Although these are general-purpose metrics that can be applied to any system, we believe these are particularly promising for exploring intrinsic and extrinsic high-order effects in neural representations. For example, the results in this paper could be used to investigate the relationship between high-order relationships between the activations within a population of neurons (in the undirected sense) and in the ways such a population encodes an external stimulus (in the directed sense). We hope future experimental and modelling work can use these results to further elucidate the bases of neural computation and information processing.

\bibliographystyle{IEEEtran}
\bibliography{bib}

\end{document}